\newcommand{\comment}[1]{}
\newtheorem{theorem}{Theorem}[section]
     \newtheorem{lemma}[theorem]{Lemma}
     \newtheorem{corollary}[theorem]{Corollary}
     \newenvironment{definition}[1][Definition:]{\begin{trivlist}
     \item[\hskip \labelsep {\bfseries #1}]}{\end{trivlist}}
     \newcommand{\qed}{\nobreak \ifvmode \relax \else
           \ifdim\lastskip<1.5em \hskip-\lastskip
           \hskip1.5em plus0em minus0.5em \fi \nobreak
           \vrule height0.75em width0.5em depth0.25em\fi}
\newcounter{remark_ordering}
\newtheorem{remark}[remark_ordering]{Remark}
     \newenvironment{proof}[1][Proof]{\begin{trivlist}
     \item[\hskip \labelsep {\bfseries #1}]}{\end{trivlist}}
\begin{document}

\title{  Novel Blind Signal Classification Method Based on Data Compression }

\author{Xudong Ma\\
FlexRealm Silicon Inc., Virginia, U.S.A. \\
 xma@ieee.org\\
}

\comment{
\author{\authorblockN{Xudong Ma \\}
\authorblockA{FlexRealm Silicon Inc., Virginia, U.S.A.\\
Email: xma@ieee.org} }}

\maketitle \thispagestyle{empty}

\begin{abstract}

This paper proposes a novel algorithm for signal classification
problems. We consider a non-stationary random signal, where samples
can be classified into several different classes, and samples in
each class are identically independently distributed with an unknown
probability distribution. The problem to be solved is to estimate
the probability distributions of the classes and the correct
membership of the samples to the classes. We propose a signal
classification method based on the data compression principle that
the accurate estimation in the classification problems induces the
optimal signal models for data compression. The method formulates
the classification problem as an optimization problem, where a so
called { ``classification gain''} is maximized. In order to
circumvent the difficulties in integer optimization, we propose a
continuous relaxation based algorithm. It is proven in this paper
that asymptotically vanishing optimality loss is incurred by the
continuous relaxation. We show by simulation results that the
proposed algorithm is effective, robust and has low computational
complexity. The proposed algorithm  can be applied to solve various
multimedia signal segmentation, analysis, and pattern recognition
problems.

\comment{ 6th International Conference on

Information Technology : New Generations

ITNG 2009

April 27-29, 2009, Las Vegas, Nevada, USA

www.itng.info}

\end{abstract}

\comment{
\begin{keywords}
image processing, image segmentation, multimedia content analysis
and understanding, machine learning, data compression
\end{keywords}}

\section{Introduction}

\label{section_introduction}

Nature multimedia signals are non-stationary in nature. For example,
the statistical property of a nature image can vary significantly
across edges; an  audio signal may contains silent segments and
active segments; and the statistics of a video signal can be totally
different before and after a change-of-scene.

Therefore, it is not a surprise that signal classification problems
arise naturedly in many scenarios of multimedia signal processing.
That is, the signal samples need to be classified into different
classes, where each class contains only signals with homogeneous
statistics. Such signal classification problems have been
extensively discussed under the name of {\it thresholding } or {\it
segmentation} (see for instance \cite{sezgin04},
\cite{srinvasan05}). The applications  range from multimedia signal
enhancement to multimedia content analysis and understanding.

In this paper, we propose an original signal classification method
based on data compression principles. The center idea of our
approach is that signal classifications can be considered as
operations of signal modeling. If a mismatched signal model is used
in  data compression, then a performance loss in terms of coding
efficiency is incurred. Therefore, an accurate signal classification
result should maximize the coding efficiency in data compression.

Based on the above data compression principles, we propose an
optimization formulation of the signal classification problems. In
the optimization formulation, the optimization variables are the
memberships  of samples to different classes, and the objective
function is the coding efficiency. More precisely, we optimize the
{\it classification gain}, which is a measure of coding efficiency.
In order to avoid the difficulties in discrete optimizations, we
further propose a continuous relaxation and random rounding solution
for the optimization problems. It is proven in this paper that the
optimality loss due to relaxation vanishes when the total sample
number is large.

In the data compression literature, the adaptive coding approach
based on classification has been previously discussed. Early works
on classifying DCT and wavelet coefficients into classes and using
individual quantizer for each class include \cite{chen77},
\cite{woods86}, \cite{tse92}. The term {\it classification gain} is
coined by Joshi, Jafarkhani, Kasner, Fischer, Farvardin, Marcelin,
and Bamberger \cite{joshi97}. Two signal classification algorithms,
the maximum classification gain and equal mean-normalized standard
deviation classification, have been proposed in \cite{joshi97}. The
signal classification approaches for adaptive coding have also been
adopted in state-of-the-art subband coding schemes (see for instance
\cite{yoo99}).

The signal classification problem can also be considered as an
unsupervised pattern recognition problem. In the pattern recognition
literature, clustering algorithms for such problems have been
previously discussed. The well-known algorithms include the K-means
algorithms, and Expectation-Maximization (EM) algorithms
\cite{duda00}, \cite{theodoridis06}. In the K-means algorithms, the
classification problem is formulated as an optimization problem,
where a sum of distances is minimized by an iterative approach. The
classification problem can also be formulated as an estimation with
incomplete data problem, and solved by the EM algorithms
\cite{dempster77}. In the EM algorithms, the log likelihoods of the
estimated distribution parameters are iteratively lower bounded and
maximized.

There are several advantages of the proposed algorithm over the
previous algorithms. First, the proposed algorithm is ``provably
good'', i.e., the algorithm is amendable to rigorous theoretical
analysis. Second, the proposed algorithm is more tractable due to
that difficult integer optimizations are avoided. The proposed
approach is also a more general and flexible framework. For example,
the proposed approach is more flexible in choosing optimization
solvers. The proposed algorithm can achieve global optimal solutions
if a global optimization solver is used; while both K-means
algorithms and EM algorithms converge to local optimal solutions.

\comment{ The difference between the proposed classification
algorithm in this paper and the previously existing algorithm is
that the proposed approach is based on continuous relaxations. The
proposed algorithm is more tractable due to that difficult discrete
optimization is avoided. The proposed algorithm is also more
amenable to rigorous theoretical discussions. Another difference
between the approach in this paper to the previous approaches is
that all signal are assumed to have zero mean in the previous
approaches.}

\emph{Our contribution:} In summary, we propose a novel principle of
signal classification  based on data compression. We propose the
continuous relaxation solution for the optimization formulation. We
prove that the optimality loss due to continuous relaxation vanishes
asymptotically with respect to the sample number.

\emph{Organization of the paper:} The rest of this paper is
organized as follows. We discuss the signal model in Section
\ref{section_system}. A review of classification gain is provided in
Section \ref{section_class_gain}. We present the proposed
classification algorithm in Section
\ref{section_classification_algorithm}. We present a theoretical
discussion on the optimality loss due to continuous relaxation in
Section \ref{sec_performance_analysis}. Numerical results are
presented in Section \ref{sec_numerical}. We present the conclusion
remark in Section \ref{sec_conclusion}.

\section{Signal Model}
\label{section_system}

\comment{
\begin{figure}[h]
 \centering
 \includegraphics[width=3in]{signal_class_model.eps}
 \caption{The signal model}
 \label{fig_signal_model}
\end{figure}}

In this paper, we consider a random signal with $N$ samples, $
x_1,x_2,\ldots,x_N$. We assume that the random signal is
non-stationary and is a mixture of samples from $J$ memoryless
information sources. That is, there exist $J$ memoryless information
sources. The corresponding probability distribution for the $i$-th
information source is $P_i$. For each n, $1\leq n\leq N$, the random
variable $x_n$ is distributed with one of the distributions $P_i$
and independent of all the other signal samples.

\comment{A block diagram of the signal model is shown in Fig.
\ref{fig_signal_model}.}

The considered signal classification problem is thus to estimate the
probability distribution of each information source and the
membership of each signal sample. We assume that the probability
distributions of all information sources are unknown, i.e., we
consider a blind signal classification scenario. The case, where the
probability distribution of each information source is known,  can
be straightforwardly solved by using the first principles of
statistical detection and estimation theory, and thus will not be
discussed. In this paper, we also assume that all information
sources are Gaussian distributed. The generalization of the proposed
algorithm to non-Gaussian cases (and also an information theoretic
analysis of the algorithm) will be discussed in a companion paper
\cite{ma08}.

\comment{ In this paper, we assume that the probability
distributions of all information sources are Gaussian. The cases of
non Gaussian will be discussed in a companion paper \cite{ma08}.

In the rest part of this paper, we assume that all signals are
mixture of samples from Gaussian information sources. That is, each
signal sample is generated from one of $J$ information sources and
Gaussian distributed.  The considered signal classification problem
is to estimate the probability distributions of each information
source and the membership of each signal sample. We assume that the
probability distributions of all information sources are unknown,
i.e., we consider a blind signal classification scenario. The case,
where the probability distribution of each information source is
known,  can be straightforwardly solved by using the first
principles of statistical detection and estimation theory, and
therefore will not be discussed. The generalization of the proposed
algorithm to non-Gaussian cases (and also an information theoretic
analysis of the algorithm) will be discussed in a companion paper
\cite{ma08}.}

\section{Classification Gain}
\label{section_class_gain}

According to the rate-distortion theory \cite{cover}, for a
memoryless Gaussian information source with variance $\sigma^2$, if
an encoder with rate $R$ is used, then the smallest achievable
mean-squared  error distortion is,
\begin{align}
D(R)=\sigma^22^{-2R}. \label{rd_fun_gaussian}
\end{align}
The function $D(R)$ is the distortion-rate function for Gaussian
information sources. For non-Gaussian information sources with the
same variance $\sigma^2$, the distortion $D(R)$ is  achievable  by
using a source encoder designed for Gaussian sources \cite{ma08}.

For the non-stationary random signal $x_1,x_2,\ldots,x_N$, there are
two approaches to encode the signal. A naive approach adopts an
encoder designed for Gaussian sources to encode all signal samples.
The achievable distortion is
\begin{align}
\sigma_x^22^{-2R},
\end{align}
where, $\sigma_x^2$ is the variance for the random signal
$x_1,\ldots,x_N$. A better approach first classifies the signal
samples into $J$ different classes, and then uses an individual
encoder for each class of samples. Denote the number of samples in
the $i$-th class by $N_i$. Define $p_i$ as the fraction $p_i=N_i/N$.
Denote the variance of samples in the $i$-th class by $\sigma_i^2$.
Under an arbitrary rate allocation, the achievable expected
distortion is,
\begin{align}
\sum_{i=1}^{J}p_i\sigma_i^22^{-2R_i},
\end{align}
where  $R_i$ is the rate allocated to encode the samples in the
$i$-th class,
\begin{align}
\sum_{i=1}^{J}p_iR_i=R-H(p_1,p_2,\ldots,p_J),
\end{align}
and $H(p_1,p_2,\ldots,p_J)$ is the entropy function with base $2$.
It can be easily found by using the Lagrangian multiple method, that
the optimal rate allocation satisfies the following condition
\begin{align}
R_i=\max\left\{\frac{1}{2}\log_2\left(\frac{\sigma_i^2}{\lambda}\right),0\right\}.
\end{align}
Assume that the rate $R$ is sufficiently high, so that $R_i>0$ for
all i, $1\leq i\leq J$. Then, the optimal achievable distortion is,
\begin{align}
\left(\prod_{i=1}^{J}\left(\sigma_i^2\right)^{p_i}\right)2^{-2R+2H(p_1,\ldots,p_J)}.
\end{align}
As in the previous research, we define the classification gain as
the ratio of two achievable distortions,
\begin{align}
G=\frac{\sigma_x^2}{
2^{2H(p_1,\ldots,p_J)}\prod_{i=1}^{J}\left(\sigma_i^2\right)^{p_i}
}.
\end{align}

\section{Classification Algorithm}
\label{section_classification_algorithm}

In this section, we present the proposed signal classification
algorithm. The algorithm is based on the principle that the optimal
classification induces the optimal signal model for data compression
(a rigorous treatment of this argument can be found in \cite{ma08}).
We formulate the classification problem as an integer optimization
where the classification gain is maximized.

The integer optimization is as follows.
\begin{align}
& \mbox{(Integer)}\,\,\,  \min
\left(\prod_{i=1}^{J}\left(\sigma_i^2\right)^{p_i}\right)2^{2H(p_1,\ldots,p_J)} \\
& \mbox{Subject to:} \nonumber \\
& \mu_i=\frac{\sum_{n=1}^{N}a_{ni}x_n}{\sum_{n=1}^{N}a_{ni}} \\
& \sigma_i^2=
\left(\frac{1}{\sum_{n=1}^{N}a_{ni}}\right)\sum_{n=1}^{N}a_{ni}\left(x_n-
\mu_i\right)^2\\
& p_i=\frac{\sum_{n=1}^{N}a_{ni}}{N} \\
& \sum_{i=1}^{J} a_{ni}=1, \mbox{ for any }n, 1\leq n\leq N \\
& a_{ni}\in \{0,1\}
\end{align}
In the integer optimization, the optimization variables are
variables $a_{ni}$, $1\leq n \leq N$, $1\leq i \leq J$. Each
variable $a_{ni}$ is a binary variable indicating the membership of
the $n$th sample, i.e.,
\begin{align}
a_{ni}=\left\{\begin{array}{ll} 1, & \mbox{if the }n\mbox{th sample
is classified to the }i\mbox{th class} \\
0, & \mbox{otherwise}
\end{array}\right.
\end{align}
Alternatively, we can also use a set of integers
$z_1,z_2,\ldots,z_N$ to represent the membership of the signal
samples. The integer $z_n=i$, if and only if the $n$th signal sample
is classified to the $i$th class. In the sequel, we will call such a
set of integers $z_1,\ldots,z_n$ a {\it classification scheme}.

Because integer programming is generally difficult to solve, we
propose a relaxation and random rounding approach. The relaxed
programming is as follows.
\begin{align}
& \mbox{(Relaxation)}\,\,\,  \min
\left(\prod_{i=1}^{J}\left(\sigma_i^2\right)^{p_i}\right)2^{2H(p_1,\ldots,p_J)} \\
& \mbox{Subject to:} \nonumber \\
& \mu_i=\frac{\sum_{n=1}^{N}a_{ni}x_n}{\sum_{n=1}^{N}a_{ni}} \\
&
\sigma_i^2= \left(\frac{1}{\sum_{n=1}^{N}a_{ni}}\right)\sum_{n=1}^{N}a_{ni}\left(x_n-\mu_i\right)^2\\
& p_i=\frac{\sum_{n=1}^{N}a_{ni}}{N} \\
& \sum_{i=1}^{J} a_{ni}=1, \mbox{ for any }n, 1\leq n\leq N \\
& 0\leq a_{ni}\leq 1
\end{align}
In the relaxed programming, the 0-1 constraints have been relaxed to
box constraints.

The proposed algorithm is summarized in Algorithm
\ref{relaxation_algorithm}. In the first step, the relaxed
optimization is solved. Denote the solution of the relaxed
optimization by $a_{ni}^\ast$. In the random rounding step, we
randomly set $z_n$ according to the values of $a_{ni}^{\ast}$. That
is, $ {\mathbb P}(z_n=i)=a_{ni}^{\ast}$.

\begin{algorithm}
\caption{The blind signal classification algorithm}
\label{relaxation_algorithm}
\begin{algorithmic}
\Procedure{blind classification}{$x_1,x_2,\ldots,x_N,J$}

\State solve the relaxed optimization problem

\For{$n\gets 1,N$}

\State randomly set $z_n=i$ with probability $a_{ni}^{\ast}$

\EndFor

\State \textbf{Return} {classification scheme $z_1,z_2,\ldots,z_N$}
\EndProcedure
\end{algorithmic}
\end{algorithm}

\comment{
\begin{algorithm}
\caption{The blind signal classification algorithm}
\label{relaxation_algorithm} \begin{algorithmic}\STATE{input signal
$x_1,x_2,\ldots,x_N$} \STATE{Classification $z_1,z_2,\ldots,z_N$}
\end{algorithmic}
\end{algorithm}
}

\section{Performance Analysis}
\label{sec_performance_analysis}

In this section, we  present a performance analysis of the proposed
classification algorithm. We  show that the optimality loss due to
relaxation and random rounding is negligible if the total sample
number $N$ is sufficiently large. Therefore, our algorithm is
near-optimal with reduced computational complexity.

We need to use the inequality in Lemma \ref{azuma_inequality} in our
discussion. The inequality is one variation of the Azuma inequality
proven by Janson \cite{azuma67}\cite{janson98}.

\begin{lemma} \cite{janson98}
\label{azuma_inequality}\emph{(Azuma Inequality)} Let
$Z_1,\dots,Z_N$ be independent random variables, with $Z_k$ taking
values in a set $\Lambda_k$. Assume that a (measurable) function
$f:\Lambda_1\times \Lambda_2\times \cdots \times
\Lambda_N\rightarrow {\mathbb R}$ satisfies the following Lipschitz
condition (L).
\begin{itemize}
\item (L) If the vectors $z,z'\in\prod_{1}^{N}\Lambda_i$ differ only
in the $k$th coordinate, then $|f(z)-f(z')|<c_k$, $k=1,\ldots,N$.
\end{itemize}
Then, the random variable $X=f(Z_1,\ldots,Z_N)$ satisfies, for any
$t\geq 0$,
\begin{align}
{\mathbb P}(X\geq {\mathbb E}X+t)\leq
\exp\left(\frac{-2t^2}{\sum_{1}^{N}c_k^2}\right),
\end{align}
\begin{align}
{\mathbb P}(X\leq {\mathbb E}X-t)\leq
\exp\left(\frac{-2t^2}{\sum_{1}^{N}c_k^2}\right).
\end{align}
\end{lemma}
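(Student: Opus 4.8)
The plan is to prove this via the standard Doob martingale decomposition combined with the Chernoff bounding method. First I would introduce the filtration $\mathcal{F}_k=\sigma(Z_1,\ldots,Z_k)$ and define the Doob martingale $X_k={\mathbb E}[f(Z_1,\ldots,Z_N)\mid \mathcal{F}_k]$, so that $X_0={\mathbb E}X$ and $X_N=X$. Writing the martingale differences $D_k=X_k-X_{k-1}$, the quantity of interest telescopes as $X-{\mathbb E}X=\sum_{k=1}^N D_k$, reducing the two-sided tail estimate to control of a sum of bounded, conditionally mean-zero increments.

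The crucial structural step is to bound the oscillation of each difference $D_k$ conditioned on the past. Using the independence of the $Z_j$, the conditional expectation $X_k$ depends only on $Z_1,\ldots,Z_k$, and by integrating out $Z_{k+1},\ldots,Z_N$ I would express $D_k$ as an average, over the later coordinates, of differences of the form $f(\ldots,Z_k,\ldots)-f(\ldots,z_k',\ldots)$ in which only the $k$th coordinate is altered. The Lipschitz condition (L) then forces each such difference to have magnitude at most $c_k$, so that, conditioned on $\mathcal{F}_{k-1}$, the random variable $D_k$ lies in an interval of length at most $c_k$ and has conditional mean zero.

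Next I would invoke Hoeffding's lemma: a zero-mean random variable supported in an interval of length $c_k$ satisfies ${\mathbb E}[e^{\lambda D_k}\mid \mathcal{F}_{k-1}]\leq \exp(\lambda^2 c_k^2/8)$ for every real $\lambda$. Applying the tower property and peeling off the conditional moment generating functions one coordinate at a time yields ${\mathbb E}[e^{\lambda(X-{\mathbb E}X)}]\leq \exp(\lambda^2\sum_{k=1}^N c_k^2/8)$. A Markov (Chernoff) bound then gives ${\mathbb P}(X\geq {\mathbb E}X+t)\leq e^{-\lambda t}\exp(\lambda^2\sum_k c_k^2/8)$ for all $\lambda>0$; optimizing by setting $\lambda=4t/\sum_k c_k^2$ produces the advertised bound $\exp(-2t^2/\sum_k c_k^2)$. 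The lower-tail inequality follows identically by applying the same argument to $-f$, whose increments are controlled by the same constants $c_k$.

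The main obstacle I anticipate is the conditional oscillation bound in the second step: one must argue rigorously that passing from $f$ to the conditional expectation $X_k$ preserves the $c_k$ bounded-differences property. This is exactly where independence of the $Z_j$ is essential---it permits the later coordinates to be integrated out without coupling them to $Z_k$---and it is the step where the measurability hypothesis on $f$ and a Fubini/coupling argument must be handled with care. Everything afterward is the routine Hoeffding-lemma-plus-Chernoff machinery, including the elementary optimization over $\lambda$ that supplies the factor $2$ in the exponent.
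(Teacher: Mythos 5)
Your proof is correct, but note that the paper itself offers no proof of this lemma at all: it is imported verbatim from Janson's survey (the citation \cite{janson98} is the entire justification), so there is nothing in the paper to compare your argument against. What you have written is the standard derivation of the bounded-differences (McDiarmid/Azuma) inequality, and all the steps check out: the Doob martingale $X_k={\mathbb E}[f\mid\mathcal{F}_k]$, the telescoping of $X-{\mathbb E}X$ into increments $D_k$, the use of independence to integrate out the later coordinates and show that, given $\mathcal{F}_{k-1}$, $D_k$ is a mean-zero variable confined to an interval of length at most $c_k$, then Hoeffding's lemma, the tower property, and the Chernoff optimization $\lambda=4t/\sum_k c_k^2$ yielding the exponent $-2t^2/\sum_k c_k^2$. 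You also correctly flag the one genuinely delicate point: one must show $D_k$ lives in an interval of \emph{length} $c_k$ (i.e., $\sup_{z_k}g_k-\inf_{z_k}g_k\leq c_k$ after integrating out $Z_{k+1},\ldots,Z_N$), not merely that $|D_k|\leq c_k$, since the weaker bound would only give an interval of length $2c_k$ and lose the factor of $4$ in the exponent. Your proof therefore supplies a complete justification for a result the paper simply asserts by reference.
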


As in the previous sections, we use $a_{ni}^{\ast}$ to denote the
solution for the relaxation programming. We use  $p_i^{\ast}$,
$\left(\sigma_i^{\ast}\right)^2$, $\mu_i^{\ast}$ to denote the
corresponding occurrence probability, variance and mean. That is,
\begin{align}
& \mu_i^{\ast}=\frac{\sum_{n=1}^{N}a_{ni}^{\ast}x_n}{\sum_{n=1}^{N}a_{ni}^{\ast}}, \\
& (\sigma_i^\ast)^2= \left(\frac{1}{\sum_{n=1}^{N}a_{ni}^\ast}\right)\sum_{n=1}^{N}a_{ni}^\ast\left(x_n-\mu_i^\ast\right)^2,\\
& p_i^\ast=\frac{\sum_{n=1}^{N}a_{ni}^\ast}{N}.
\end{align}
We  use $z_1,\ldots,z_N$ to denote the classification scheme
obtained from Algorithm \ref{relaxation_algorithm}. In the
following, we abuse the notation and use $a_{ni}$ to denote the
randomly rounded version of the variable $a_{ni}^\ast$, i.e.,
\begin{align}
a_{ni}=\left\{\begin{array}{ll}
1, & \mbox{if }z_n=i \\
0, & \mbox{otherwise}
\end{array}\right.
\end{align}
Similarly, we use  $p_i$, $\sigma_i^2$, $\mu_i$ to denote the
corresponding occurrence probability, variance, and mean. That is,
\begin{align}
& \mu_i=\frac{\sum_{n=1}^{N}a_{ni}x_n}{\sum_{n=1}^{N}a_{ni}}, \\
& \sigma_i^2= \left(\frac{1}{\sum_{n=1}^{N}a_{ni}}\right)\sum_{n=1}^{N}a_{ni}\left(x_n-\mu_i\right)^2,\\
& p_i=\frac{\sum_{n=1}^{N}a_{ni}}{N}.
\end{align}

\begin{definition}
Let $\epsilon_1,\epsilon_2,\epsilon_3$ be arbitrary positive real
numbers. We say that one classification scheme is
$(\epsilon_1,\epsilon_2,\epsilon_3)$-typical if the following
conditions hold for all $i$, $1\leq i\leq J$,
\begin{align}
\left|\sum_{n=1}^{N}a_{ni}-\sum_{n=1}^{N}a_{ni}^\ast\right|\leq
\epsilon_1 N,
\end{align}
\begin{align}
\left|\sum_{n=1}^{N}a_{ni}x_n-\sum_{n=1}^{N}a_{ni}^\ast
x_n\right|\leq \epsilon_2 N,
\end{align}
\begin{align}
\left|\sum_{n=1}^{N}a_{ni}\left(x_n-\mu_i^\ast\right)^2-\sum_{n=1}^{N}a_{ni}^\ast\left(x-\mu_i^\ast\right)^2\right|\leq
\epsilon_3 N.
\end{align}
\end{definition}

\begin{lemma}
If $\epsilon_1,\epsilon_2,\epsilon_3$ all go to zero, then for
$(\epsilon_1,\epsilon_2,\epsilon_3)$-typical classification schemes,
$\mu_i$, $p_i$, $\sigma_i^2$ go to  $\mu_i^\ast$, $p_i^\ast$,
$\left(\sigma_i^\ast\right)^2$ respectively.
\end{lemma}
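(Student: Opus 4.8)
The plan is to handle the three quantities separately, in increasing order of difficulty, after fixing some shorthand. Write $S_i=\sum_{n=1}^{N}a_{ni}$, $S_i^\ast=\sum_{n=1}^{N}a_{ni}^\ast$, $T_i=\sum_{n=1}^{N}a_{ni}x_n$, $T_i^\ast=\sum_{n=1}^{N}a_{ni}^\ast x_n$, and $Q_i=\sum_{n=1}^{N}a_{ni}(x_n-\mu_i^\ast)^2$, $Q_i^\ast=\sum_{n=1}^{N}a_{ni}^\ast(x_n-\mu_i^\ast)^2$, so that the three typicality conditions read $|S_i-S_i^\ast|\le\epsilon_1 N$, $|T_i-T_i^\ast|\le\epsilon_2 N$, and $|Q_i-Q_i^\ast|\le\epsilon_3 N$. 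The standing assumption I would make is that each class carries a nonvanishing fraction of the samples, i.e. $p_i^\ast$ is bounded away from zero; this keeps the denominators $S_i,S_i^\ast$, both of order $N$, safely positive. The occurrence probability is then immediate: since $p_i=S_i/N$ and $p_i^\ast=S_i^\ast/N$, the first condition gives $|p_i-p_i^\ast|\le\epsilon_1$, so $p_i\to p_i^\ast$, and in particular $p_i$ itself stays bounded away from zero for $\epsilon_1$ small.

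For the mean I would express the difference of the two ratios over a common denominator, $\mu_i-\mu_i^\ast=(T_iS_i^\ast-T_i^\ast S_i)/(S_iS_i^\ast)$, and substitute $T_i=T_i^\ast+\delta_T$, $S_i=S_i^\ast+\delta_S$ with $|\delta_T|\le\epsilon_2 N$ and $|\delta_S|\le\epsilon_1 N$. The leading products cancel, leaving $\mu_i-\mu_i^\ast=(\delta_T-\mu_i^\ast\delta_S)/S_i$, whence $|\mu_i-\mu_i^\ast|\le(\epsilon_2+|\mu_i^\ast|\epsilon_1)/p_i$, which vanishes as $\epsilon_1,\epsilon_2\to0$.

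The variance is the only step requiring a genuine idea, because its typicality condition is stated with the samples centred at $\mu_i^\ast$ rather than at the scheme's own mean $\mu_i$. Here I would invoke the fact that the weighted mean minimises the weighted sum of squares, which gives the parallel-axis identity $\sum_{n}a_{ni}(x_n-\mu_i^\ast)^2=\sum_{n}a_{ni}(x_n-\mu_i)^2+S_i(\mu_i-\mu_i^\ast)^2$, that is $Q_i=S_i\sigma_i^2+S_i(\mu_i-\mu_i^\ast)^2$, together with the simpler companion $Q_i^\ast=S_i^\ast(\sigma_i^\ast)^2$. Solving yields $\sigma_i^2=Q_i/S_i-(\mu_i-\mu_i^\ast)^2$ and $(\sigma_i^\ast)^2=Q_i^\ast/S_i^\ast$. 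The squared correction term is already controlled by the mean estimate of the previous paragraph, and the residual difference $Q_i/S_i-Q_i^\ast/S_i^\ast$ is dispatched by exactly the same difference-of-ratios computation, producing a bound of the form $(\epsilon_3+(\sigma_i^\ast)^2\epsilon_1)/p_i$. Combining the two contributions shows $\sigma_i^2\to(\sigma_i^\ast)^2$.

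The main obstacle is not the algebra but ensuring the denominators stay bounded below: every estimate divides by $S_i=p_iN$, so the argument genuinely needs the assumption $p_i^\ast\geq\mathrm{const}>0$ (no class is asymptotically empty), together with the boundedness of $\mu_i^\ast$ and $(\sigma_i^\ast)^2$ feeding the numerators. Once that is secured, the three bounds $|p_i-p_i^\ast|\le\epsilon_1$, $|\mu_i-\mu_i^\ast|\le(\epsilon_2+|\mu_i^\ast|\epsilon_1)/p_i$, and the variance bound all tend to zero simultaneously with the $\epsilon$'s, giving the claim.
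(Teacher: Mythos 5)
Your proposal is correct and follows essentially the same route as the paper: the paper also reduces the variance step to the identity $\sigma_i^2=\frac{\sum_n a_{ni}(x_n-\mu_i^\ast)^2}{\sum_n a_{ni}}-(\mu_i-\mu_i^\ast)^2$ (derived by expanding $(x_n-\mu_i^\ast+\mu_i^\ast-\mu_i)^2$, i.e.\ your parallel-axis identity) and dismisses the $\mu_i$ and $p_i$ cases as immediate. Your version is simply more explicit, supplying the quantitative difference-of-ratios bounds and, usefully, flagging the assumption the paper leaves tacit, namely that each $p_i^\ast$ must be bounded away from zero for the denominators $\sum_n a_{ni}$ to be of order $N$.
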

\begin{proof}
It can be easily checked that $\mu_i$ goes to $\mu_i^\ast$, and
$p_i$ goes to $p_i^\ast$. For $\sigma_i^2$, we notice that
\begin{align}
& \sum_{n=1}^{N}a_{ni}(x_n-\mu_i)^2 \\
& =\sum_{n=1}^{N}a_{ni}(x_n-\mu_i^{\ast}+\mu_i^{\ast}-\mu_i)^2\\
& =\sum_{n=1}^{N}a_{ni}(x_n-\mu_i^{\ast})^2+
\sum_{n=1}^{N}a_{ni}(\mu_i^{\ast}-\mu_i)^2 \\
& \hspace{0.5in} +2
\sum_{n=1}^{N}a_{ni}(x_n-\mu_i^{\ast})(\mu_i^{\ast}-\mu_i) \\
& =\sum_{n=1}^{N}a_{ni}(x_n-\mu_i^{\ast})^2+
p_iN(\mu_i^{\ast}-\mu_i)^2 \\
& \hspace{0.5in} +2
(\mu_i^{\ast}-\mu_i)\sum_{n=1}^{N}a_{ni}(x_n-\mu_i^{\ast}) \\
& =\sum_{n=1}^{N}a_{ni}(x_n-\mu_i^{\ast})^2-
p_iN(\mu_i^{\ast}-\mu_i)^2
\end{align}
Therefore,
\begin{align}
\sigma_i^2 & =\frac{\sum_{n=1}^{N}a_{ni}(x_n-\mu_i)^2}{\sum_{n=1}^{N}a_{ni}}\\
&
=\frac{\sum_{n=1}^{N}a_{ni}(x_n-\mu_i^{\ast})^2}{\sum_{n=1}^{N}a_{ni}}-(\mu_i^{\ast}-\mu_i)^2
\end{align}
It follows that $\sigma_i^2$ goes to
$\left(\sigma_i^\ast\right)^2$.\qed
\end{proof}

\begin{theorem}
\label{main_theorem} Let $\epsilon_1,\epsilon_2,\epsilon_3$ be
arbitrary positive real numbers. Let $V=\max_n x_n-\min_n x_n$.
Then, the probability that the classification scheme obtained from
Algorithm \ref{relaxation_algorithm} is not
$(\epsilon_1,\epsilon_2,\epsilon_3)$-typical is upper bounded as
follows.
\begin{align}
& {\mathbb P}\left(\mbox{the classification scheme is not
}(\epsilon_1,\epsilon_2,\epsilon_3)\mbox{-typical}\right) \\
& \leq 2J
\exp\left(-2\epsilon_1^2N\right)+2J\exp\left(\frac{-2\epsilon_2^2N}{V^2}\right)+
2J\exp\left(\frac{-2\epsilon_3^2N}{V^4}\right)
\end{align}
\end{theorem}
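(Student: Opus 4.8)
The plan is to apply the Azuma inequality (Lemma~\ref{azuma_inequality}) separately to each of the three quantities appearing in the definition of $(\epsilon_1,\epsilon_2,\epsilon_3)$-typicality, for each class $i$, and then to combine the resulting tail bounds by a union bound. The key observation is that the random rounding step sets $z_1,\ldots,z_N$ independently, with $\mathbb{P}(z_n=i)=a_{ni}^{\ast}$ (here $\sum_i a_{ni}^{\ast}=1$ by the relaxation constraint, so these are genuine distributions), so the $z_n$ play the role of the independent variables $Z_1,\ldots,Z_N$ in the Lemma. Moreover, for a fixed class $i$ and the fixed constant $\mu_i^{\ast}$, each of the three defining sums is a \emph{linear} function of the indicators $a_{ni}=\mathbf{1}[z_n=i]$, so its expectation over the rounding is exactly the corresponding starred quantity: $\mathbb{E}\sum_n a_{ni}=\sum_n a_{ni}^{\ast}$, $\mathbb{E}\sum_n a_{ni}x_n=\sum_n a_{ni}^{\ast}x_n$, and $\mathbb{E}\sum_n a_{ni}(x_n-\mu_i^{\ast})^2=\sum_n a_{ni}^{\ast}(x_n-\mu_i^{\ast})^2$. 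Thus each typicality condition is precisely a deviation-from-the-mean statement of the form to which the Lemma applies.

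Next I would determine the Lipschitz constants $c_k$ for the three functions by changing a single coordinate $z_k$; note that $a_{ni}$ depends only on $z_n$, so flipping $z_k$ affects only the $k$th term of each sum. For $f=\sum_n a_{ni}$ the change is at most $1$, so $c_k=1$ and $\sum_k c_k^2=N$; taking $t=\epsilon_1 N$ gives a per-tail bound $\exp(-2\epsilon_1^2 N)$. For $f=\sum_n a_{ni}x_n$ the change is at most $|x_k|\leq V$, so $c_k=V$, $\sum_k c_k^2\leq V^2N$, and $t=\epsilon_2 N$ yields $\exp(-2\epsilon_2^2 N/V^2)$. For $f=\sum_n a_{ni}(x_n-\mu_i^{\ast})^2$ the change is at most $(x_k-\mu_i^{\ast})^2\leq V^2$, so $c_k=V^2$, $\sum_k c_k^2\leq V^4N$, and $t=\epsilon_3 N$ yields $\exp(-2\epsilon_3^2 N/V^4)$.

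The bound is then assembled by a union bound. Each typicality condition is two-sided (an absolute value), so both tail inequalities of the Lemma are invoked, contributing a factor $2$; and each condition must hold for all $J$ classes, contributing a factor $J$. Summing the three groups of $2J$ terms reproduces exactly the claimed upper bound.

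The main obstacle is verifying the Lipschitz constants, in particular bounding a single-coordinate flip by $V$ and by $V^2$. For the third function this rests on the fact that $\mu_i^{\ast}=\sum_n a_{ni}^{\ast}x_n/\sum_n a_{ni}^{\ast}$ is a weighted average of the samples with nonnegative weights, hence lies in $[\min_n x_n,\max_n x_n]$; since $x_k$ lies in the same interval, $|x_k-\mu_i^{\ast}|\leq V$ and so $(x_k-\mu_i^{\ast})^2\leq V^2$. The bound $|x_k|\leq V$ needed for the second function is the most delicate point, since as stated it requires the samples to straddle the origin (equivalently $\max_k|x_k|\leq V$), which holds for instance after a zero-mean centering of the data. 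Once the constants are justified, the Lipschitz condition~(L) of the Lemma is satisfied, measurability is immediate because each function is a finite sum, and the remaining steps are routine substitutions of $t=\epsilon_j N$.
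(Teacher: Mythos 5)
Your proposal is correct and follows essentially the same route as the paper, whose proof is only a two-line sketch (``by the Azuma inequality \ldots the theorem follows from a union bound''); you supply exactly the missing details, namely the identification of the independent variables $z_n$, the linearity giving $\mathbb{E}\sum_n a_{ni}x_n=\sum_n a_{ni}^{\ast}x_n$ etc., the Lipschitz constants $1$, $V$, $V^2$, and the $2J$ factors from two-sidedness and the $J$ classes. You also correctly flag the one genuine subtlety the paper silently glosses over: a single-coordinate flip changes $\sum_n a_{ni}x_n$ by $|x_k|$, which is bounded by $V=\max_n x_n-\min_n x_n$ only when the samples straddle the origin (or after centering), whereas the bounds $1$ and $(x_k-\mu_i^{\ast})^2\leq V^2$ are unconditional.
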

\begin{proof}
By using the Azuma inequality, we can show that
\begin{align}
{\mathbb
P}\left(\left|\sum_{n=1}^{N}a_{ni}-\sum_{n=1}^{N}a_{ni}^\ast\right|\geq
\epsilon_1 N\right)\leq 2\exp\left(-2\epsilon_1^2N\right),
\end{align}
\begin{align}
{\mathbb
P}\left(\left|\sum_{n=1}^{N}a_{ni}x_n-\sum_{n=1}^{N}a_{ni}^\ast
x_n\right|\geq \epsilon_2 N\right)\leq
2\exp\left(\frac{-2\epsilon_2^2N}{V^2}\right),
\end{align}
\begin{align}
& {\mathbb
P}\left(\left|\sum_{n=1}^{N}a_{ni}\left(x_n-\mu_i^\ast\right)^2-\sum_{n=1}^{N}a_{ni}^\ast\left(x-\mu_i^\ast\right)^2\right|\geq
\epsilon_3 N\right) \\
& \leq 2\exp\left(\frac{-2\epsilon_3^2N}{V^4}\right).
\end{align}
The theorem follows from a union bound.\qed
\end{proof}

\begin{corollary}
\label{main_corollary}  If the sample number $N$ is sufficiently
large, then the classification scheme obtained from Algorithm
\ref{relaxation_algorithm} is
$(\epsilon_1,\epsilon_2,\epsilon_3)$-typical with probability close
to one.
\end{corollary}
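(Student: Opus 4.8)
The plan is to read the corollary off directly from Theorem \ref{main_theorem}, which already supplies a fully explicit upper bound on the probability of failure. First I would fix the three tolerances $\epsilon_1,\epsilon_2,\epsilon_3>0$ once and for all, treating them as constants independent of $N$. The corollary then reduces to showing that the right-hand side of the bound in Theorem \ref{main_theorem}, namely $2J\exp(-2\epsilon_1^2N)+2J\exp(-2\epsilon_2^2N/V^2)+2J\exp(-2\epsilon_3^2N/V^4)$, tends to $0$ as $N\to\infty$; the probability that the scheme \emph{is} typical is then $1$ minus a quantity that vanishes, hence close to one.

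Next I would examine the three exponential terms separately. The number of classes $J$ is fixed, so each prefactor $2J$ is harmless. Each exponent is negative and, crucially, carries a factor of $N$ in its numerator, so it scales linearly in $N$. Provided the remaining constants $\epsilon_k$, $V^2$, and $V^4$ in the respective denominators stay bounded as $N$ grows, every exponent diverges to $-\infty$, each term decays exponentially, and the union bound already established in Theorem \ref{main_theorem} delivers the claim. This is the routine part.

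The one point that genuinely needs care — and the main obstacle — is the quantity $V=\max_n x_n-\min_n x_n$, which appears in the second and third terms and is itself a function of the sample set and therefore of $N$. If $V$ were allowed to grow like $\sqrt{N}$ or faster, the ratios $N/V^2$ and $N/V^4$ would fail to diverge and the bound would not vanish. The plan is to dispose of this by observing that for any fixed realization of a bounded signal $V$ is simply a constant, so the bound vanishes outright; and for the Gaussian sources assumed in Section \ref{section_system}, the range of $N$ i.i.d. samples grows only like $\sqrt{\log N}$, whence $V^2=O(\log N)$ and $V^4=O((\log N)^2)$. Since $N$ dominates any power of $\log N$, both $N/V^2\to\infty$ and $N/V^4\to\infty$, and all three terms still tend to zero. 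I would therefore state the corollary as holding for $N$ large relative to the observed range, making explicit that the threshold on $N$ depends on $V$.
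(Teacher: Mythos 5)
Your proposal is correct and takes essentially the same route as the paper, whose entire proof is the one-line observation that the bound of Theorem \ref{main_theorem} tends to zero as $N\to\infty$. Your additional discussion of the $N$-dependence of $V=\max_n x_n-\min_n x_n$ is a legitimate refinement the paper silently skips (it implicitly treats $V$ as a constant of the observed realization), and your resolution — either condition on the data so $V$ is fixed, or note that for Gaussian sources $V=O(\sqrt{\log N})$ so $N/V^2$ and $N/V^4$ still diverge — is sound.
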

\begin{proof}
The upper bound in Theorem \ref{main_theorem} is close to zero for
sufficiently large $N$.\qed
\end{proof}

\begin{corollary}
\label{existence_corollary}  If the sample number $N$ is
sufficiently large, then there exists at least one
$(\epsilon_1,\epsilon_2,\epsilon_3)$-typical classification scheme.
\end{corollary}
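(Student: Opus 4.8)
The plan is to argue by the probabilistic method, leaning directly on the bound already established in Theorem \ref{main_theorem}. The random rounding step of Algorithm \ref{relaxation_algorithm} defines a probability distribution over classification schemes $z_1,\ldots,z_N$, and every outcome of that experiment is by construction a valid classification scheme (each $z_n$ lands in $\{1,\ldots,J\}$). So the sample space of the randomized algorithm is a subset of the set of all classification schemes, and it suffices to show that a scheme drawn by the algorithm is typical with strictly positive probability: if the event of being typical had positive probability, it could not be empty, so at least one typical scheme must exist.

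First I would recall the upper bound from Theorem \ref{main_theorem} on the probability that the output scheme fails to be $(\epsilon_1,\epsilon_2,\epsilon_3)$-typical, namely
\begin{align}
2J\exp\left(-2\epsilon_1^2N\right)+2J\exp\left(\frac{-2\epsilon_2^2N}{V^2}\right)+2J\exp\left(\frac{-2\epsilon_3^2N}{V^4}\right).
\end{align}
Each of the three exponential terms decays to $0$ as $N\to\infty$ (this is precisely the content of Corollary \ref{main_corollary}), so for all $N$ large enough the entire bound is strictly smaller than $1$. Consequently the probability that the randomly rounded scheme \emph{is} $(\epsilon_1,\epsilon_2,\epsilon_3)$-typical is strictly positive.

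Finally I would invoke the elementary fact that an event of positive probability cannot be the empty event: since the algorithm produces a typical scheme with positive probability, the set of typical classification schemes must contain at least one element. This yields the claimed existence for sufficiently large $N$, and completes the proof.

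I do not anticipate a genuine obstacle here, since the result is a direct corollary of the concentration bound. The only point requiring a moment of care is the observation that every realization of the rounding step is automatically a legitimate classification scheme, so that ``positive probability under the algorithm'' legitimately certifies ``existence among classification schemes''; this is what links the randomized guarantee of Corollary \ref{main_corollary} to the purely combinatorial existence statement.
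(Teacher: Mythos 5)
Your proposal is correct and follows essentially the same route as the paper: the paper's one-line proof also invokes the probabilistic method, noting that Algorithm \ref{relaxation_algorithm} produces a typical scheme with probability close to one (hence positive), so such a scheme must exist. You simply spell out the elementary step that a positive-probability event is non-empty, which the paper leaves implicit.
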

\begin{proof}
We have presented an algorithm, which constructs such a
classification scheme with success probability close to one.\qed
\end{proof}

\begin{remark}
Theorem \ref{main_theorem} and Corollary \ref{existence_corollary}
imply that the gap between the optimal classification gain achieved
in the relaxation optimization and the optimal classification gain
achieved in the integer optimization goes to zero asymptotically. In
other words, the continuous relaxation incurs an asymptotically
vanishing optimality loss.
\end{remark}


%

\section{Numerical Results }

\label{sec_numerical}

In this section, we present numerical results for the proposed blind
classification algorithm.  The classification error probabilities
are measured by false classification ratios, $r_c=m_i/n_i$, where
$n_i$ denotes the number of samples belong to the class $i$, and
$m_i$ denotes the number of samples belong to the class $i$ and are
classified to classes other than the class $i$. The IPOPT package is
used to solve the optimization programming \cite{wachter06}.

In Fig. \ref{case_one}, we depict the result of the proposed
algorithm for a one dimensional mixed signal of two classes, with
one class having mean 128 and variance 16, and the other class
having mean 16 and variance 16. The false classification ratios are
all $0$\%. In Fig. \ref{case_two}, we depict the result of the
proposed algorithm for a one dimensional mixed signal of two
classes, with one class having mean 128 and variance 2500, and the
other class having mean 128 and variance 25. The false
classification ratios are $16.41$\% and $6.25$\%. In Fig.
\ref{case_three}, we depict the result of the proposed algorithm for
a one dimensional mixed signal of two classes, with one class having
mean 50 and variance 2500, and the other class having mean 5 and
variance 25. The false classification ratios are $10.16$\% and
$3.91$\%. In each figure, the signal is shown in the upper part of
the figure. The classification result is shown in the lower part of
the figure. The grey region of the bar indicates the samples which
are classified into one class, and the white region of the bar
indicates the samples which are classified into the other class. In
all the three cases, the signal sample number $N=256$.

\begin{figure}[h]
 \centering
 \includegraphics[width=3in]{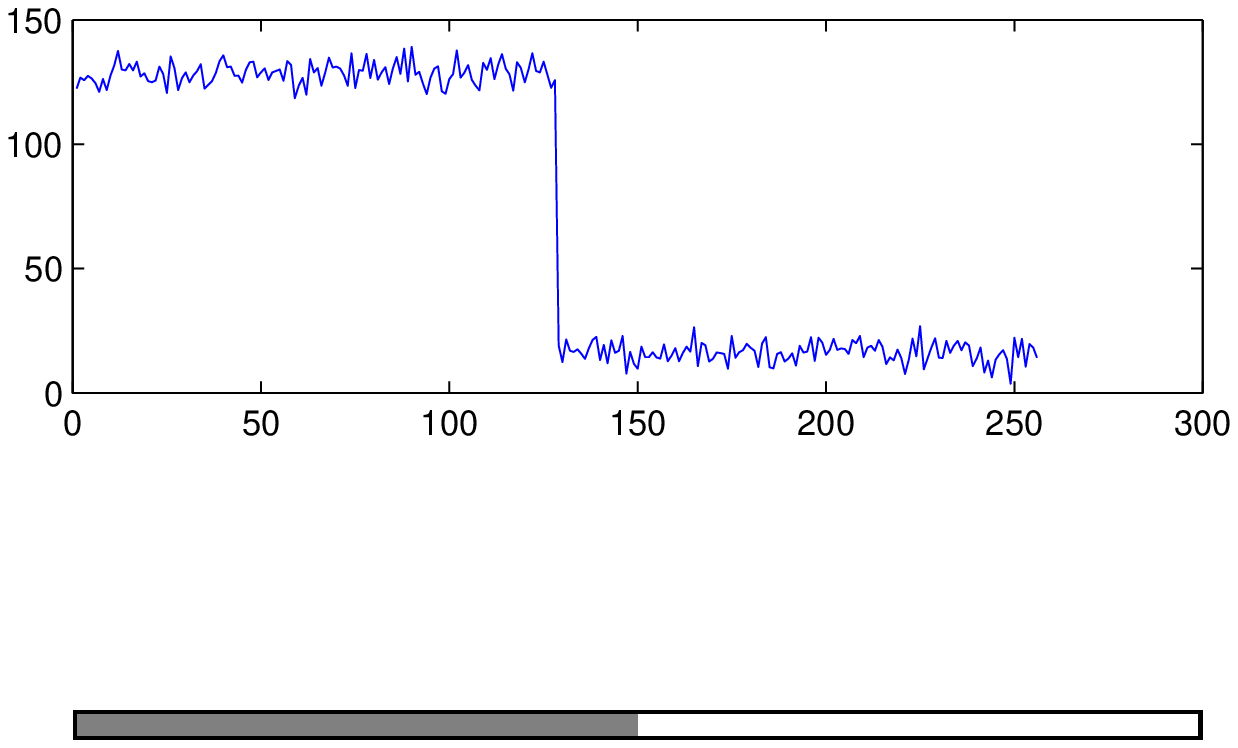}
 \caption{One dimensional, two classes.}
 \label{case_one}
\end{figure}

\begin{figure}[h]
 \centering
 \includegraphics[width=3in]{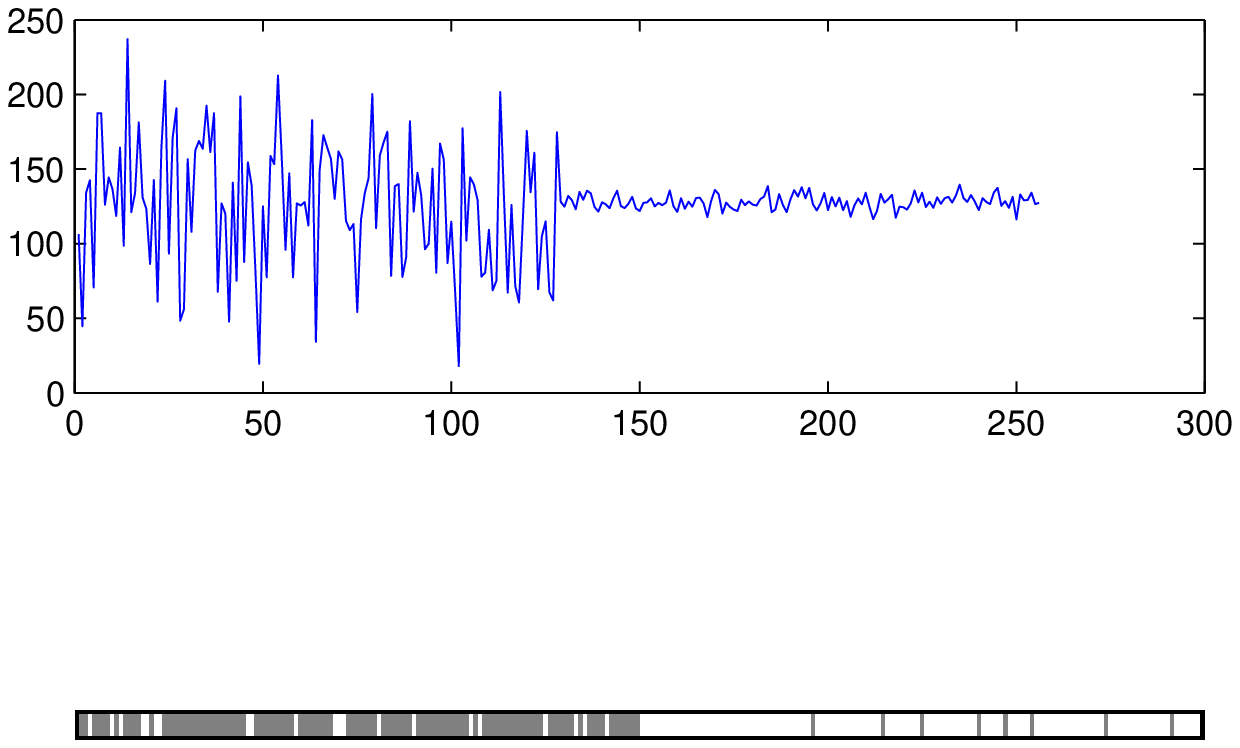}
 \caption{One dimensional, two classes. }
 \label{case_two}
\end{figure}

\begin{figure}[h]
 \centering
 \includegraphics[width=3in]{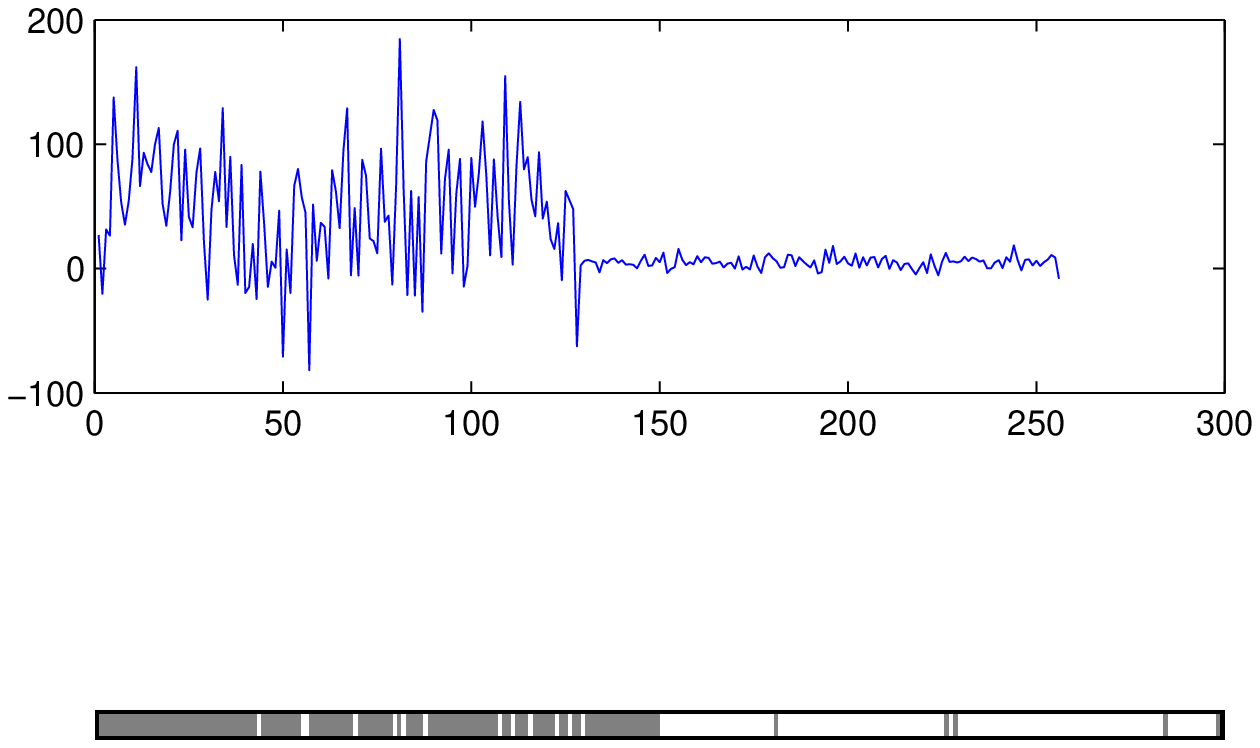}
 \caption{One dimensional, two classes. }
 \label{case_three}
\end{figure}

In Fig. \ref{case_two_dim}, we depict the result of the proposed
algorithm for a two dimensional mixed signal of two classes, with
one class having mean 200 and variance 400, and the other class
having mean 5 and variance 400. The signal is shown in the left part
of the figure. The classification result is shown in the right part
of the figure. The false classification ratios are $1.93$\% and
$0.52$\%. The size of the image is $32$ by $32$ pixels.

\begin{figure}[h]
 \centering
 \includegraphics[width=3in]{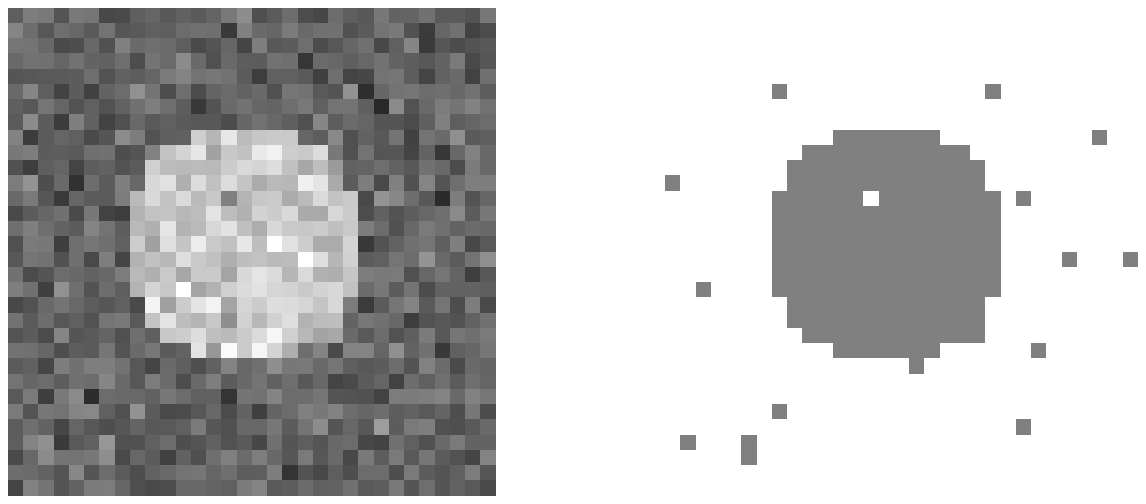}
 \caption{Two dimensional, two classes.}
 \label{case_two_dim}
\end{figure}

In summary, we find that the proposed classification algorithm is
effective and robust. The algorithm has low computational
complexity.

\section{Conclusion}

\label{sec_conclusion}

This paper proposes a blind classification algorithm for
non-stationary signals, which can be modeled as mixtures of signals
from several information sources. The proposed  algorithm is based
on data compression principles and relaxed continuous optimizations.
We present theoretical discussions, which show that our algorithm is
asymptotically optimal. Numerical results show that the proposed
algorithm is effective, robust and has low computational complexity.
The proposed algorithm  can be used to solve various multimedia
signal segmentation, analysis, and pattern recognition problems.

%


\bibliographystyle{latex8}
\bibliography{the_bib}

\end{document}